\spnewtheorem{observation}[lemma]{Observation}{\bfseries}{\itshape}
\begin{document}

\title{Unified Compression-Based Acceleration\\ of Edit-Distance Computation}

\author{Danny Hermelin\thanks{Supported by the Adams Fellowship of the Israel
Academy of Sciences and Humanities.} \inst{1} Gad M.
Landau\inst{1,2} Shir Landau\inst{3} \and Oren Weimann\inst{4}}

\institute{Department of Computer Science, University of Haifa,
Haifa, Israel. \email{danny@cri.haifa.ac.il,
landau@cs.haifa.ac.il} \and Department of Computer and
Information Science, Polytechnic Institute of NYU, NY, USA.
\and Department of Computer Science, Bar-Ilan University, Ramat
Gan, Israel. \email{shir.landau@live.biu.ac.il} \and  Department of Computer Science and Applied Mathematics, Weizmann Institute of Science, Rehovot, Israel. \email{oren.weimann@weizmann.ac.il} }

\maketitle

\begin{abstract}


The edit distance problem is a classical fundamental problem in
computer science in general, and in combinatorial pattern
matching in particular. The standard dynamic programming
solution for this problem computes the edit-distance between a
pair of strings of total length $O(N)$ in $O(N^2)$ time. To
this date, this quadratic upper-bound has never been
substantially improved for general strings. However, there are
known techniques for breaking this bound in case the strings
are known to compress well under a particular compression
scheme. The basic idea is to first compress the strings, and
then to compute the edit distance between the compressed strings.

\hspace{10pt} As it turns out, practically all known $o(N^2)$
edit-distance algorithms work, in some sense, under the same
paradigm described above. It is therefore natural to ask
whether there is a single edit-distance algorithm that works
for strings which are compressed under any compression scheme.
A rephrasing of this question is to ask whether a single
algorithm can exploit the compressibility properties of strings
under any compression method, even if each string is compressed using a different compression. In this paper we set out to
answer this question by using \emph{straight line programs}. These
provide a generic platform for representing many popular compression schemes
including the LZ-family, Run-Length Encoding, Byte-Pair Encoding, and dictionary methods.

\hspace{10pt}
For two strings of
total length $N$ having straight-line program representations
of total size $n$, we present an algorithm running in $O(nN
\log(N/n))$ time for computing the edit-distance of these two
strings under any rational scoring function, and an
$O(n^{2/3}N^{4/3})$ time algorithm for arbitrary scoring
functions. Our new result, while providing a
significant speed up for highly compressible strings, does not
surpass the quadratic time bound even in the worst case
scenario.

\end{abstract}

\newpage
\pagenumbering{arabic}
\section{Introduction}
\label{Section: Introduction}

Text compression is traditionally applied in order to reduce the use of resources such as storage and bandwidth.
However, in the algorithmic community, there has also been a trend to exploit the properties of compressed data for accelerating the solutions to classical problems on strings. The basic idea is to first compress the input strings, and then
solve the problem on the resulting compressed strings. The compression process in these algorithms is merely a tool, and is used as an intermediate step in the complete algorithm. It is therefore possible that these algorithms may choose to decompress the compressed data after the properties of the compression process have been put to use.   Various
compression schemes, such as LZ77~\cite{ZivLempel1977},
LZW-LZ78~\cite{ZivLempel1976}, Huffman coding, Byte-Pair
Encoding (BPE)~\cite{Shibata-et-al-1999}, and Run-Length Encoding
(RLE), were thus employed to accelerate exact string
matching~\cite{AmirLandauSokol2003,KarkkainenUkkonen1996,Lifshits2007,Manber1994,Shibata-et-al-2000},
subsequence
matching~\cite{CegielskiGuessarianLifshitsMatiyasevich2006},
approximate pattern
matching~\cite{AmirBensonFarach1996,KarkkainenNavarroUkkonen2000,KarkkainenUkkonen1996,NavarroKidaetal2001},
and more~\cite{MozesWeimannZiv2007}.

Determining the \emph{edit-distance} between
a pair of strings is a fundamental problem in computer science
in general, and in combinatorial pattern matching in
particular, with applications ranging from database indexing
and word processing, to bioinformatics~\cite{Gusfield1997}. It
asks to determine the minimum cost of transforming one string
into the other via a sequence of character deletion, insertion,
and replacement operations. Ever since the classical
$O(N^2)$ dynamic programming algorithm by Wagner and
Fisher in 1974 for two input strings of length
$N$~\cite{WagnerFischer1974}, there have been numerous
papers that used compression to accelerate edit-distance
computation. The first paper to break the quadratic time barrier of
edit-distance computation was the seminal paper of Masek and
Paterson~\cite{MasekPaterson1980}, who applied the
"Four Russians technique" to obtain a running time of
$O({N^2}/{\lg N})$ for any pair of strings, and of $O({N^2}/{\lg^2 N})$ assuming a unit cost RAM model. Their algorithm
essentially exploits repetitions in the strings to obtain the
speed up, and so in many ways it can also be viewed as
compression based. In fact, one can say that their algorithm
works on the ``naive compression" that all strings over
constant sized alphabets have.

Apart from its near quadratic runtime, a drawback of the Masek and Paterson algorithm is that it
can only be applied when the given scoring function is
a rational number. That is, when the cost of every elementary character operation is rational. Note that this restriction is indeed a limitation in
computational biology, where PAM and evolutionary distance similarity matrices are used for scoring~\cite{CrochemoreLandauZiv-Ukelson2003,MasekPaterson1980}.
The Masek and Paterson algorithm was later extended to general
alphabets by Bille and Farach-Colton~\cite{BilleFarach2005}.
Bunke and Csirik presented a simple algorithm for computing the
edit-distance of strings that compress well under
RLE~\cite{BunkeCsirik1995}. This algorithm was later improved
in a sequence of
papers~\cite{ApostolicoLandauSkiena1999,ArbellLandauMitchell2001,CrochemoreLandauZiv-Ukelson2003,MakinenNavarroUkkonen1999}
to an algorithm running in time $O(nN)$, for strings of total
length $N$ that encode into run-length strings of total length
$n$. In~\cite{CrochemoreLandauZiv-Ukelson2003}, an algorithm
with the same time complexity was given for strings that are
compressed under LZW-LZ78.
It is interesting to note that all known techniques for
improving on the $O(N^2)$ time bound of edit-distance
computation, essentially apply the acceleration via compression paradigm.

There are two important things to observe from the above:
First, all known techniques for improving on the $O(N^2)$ time
bound of edit-distance computation, essentially apply
acceleration via compression. Second, apart from RLE, LZW-LZ78,
and the naive compression of the Four Russians technique, we do
not know how to efficiently compute edit-distance under other
compression schemes. For example, no algorithm is known which
substantially improves $O(N^2)$ on strings which compress well
under LZ77. Such an algorithm would be interesting since there
are various types of strings that compress much better under
LZ77 than under RLE or LZW-LZ78. In light of this, and due to
the practical and theoretical importance of substantially
improving on the quadratic lower bound of string edit-distance
computation, we set out to answer the following question:
\begin{quote}
``Is there a general compression based edit-distance algorithm
that can exploit the compressibility of two strings under
\emph{any} compression scheme?''
\end{quote}
We propose a unified algorithm for accelerating edit-distance computation
via acceleration. The key idea is to use straight-line programs
(SLPs), which as shown by Rytter~\cite{Rytter2003}, can be used
to model many traditional compression schemes including the
LZ-family, RLE, Byte-Pair Encoding, and dictionary methods. These can
be transformed to straight-line programs quickly and without
large expansion\footnote{Important exceptions of this list are
statistical compressors such as Huffman or arithmetic coding,
as well as compressions that are applied after a
Burrows-Wheeler transformation.}. Thus, devising a fast
edit-distance algorithm for strings that have small SLP
representations, would give a fast algorithm for strings which
compress well under the compression schemes generalized by
SLPs. This has two main advantages:
\begin{enumerate}
\item It allows the algorithm designer to ignore technical
    details of various compression schemes and their associated edit-distance algorithms.
\item One can accelerate edit-distance computation between
    two strings that compress well under \emph{different}
    compression schemes.
\end{enumerate}
In addition, we believe that a fast SLP edit-distance algorithm
might lead to an $O(N^{2-\varepsilon})$ algorithm for general
edit-distance computation, a major open problem in computer
science.

Tiskin~\cite{Tiskin2008} also studied, independently of the
authors, the use of SLPs for edit-distance computation. He gave
an $O(nN^{1.5})$ algorithm for computing the edit-distance
between two SLPs under rational scoring functions. Here, and
throughout the paper, we use $N$ to denote the total length of
the input strings, and $n$ as the total size of their SLP
representations. 
Recently, Tiskin~\cite{Tiskin2009} was able
to speed up his rational scoring function algorithm
of~\cite{Tiskin2008} to an $O(nN \log N)$ algorithm.

\subsection{Our results}
\label{Subsection: Our results}

Initial results for these problems were shown by the authors in ~\cite{HermelinLandauLandauWeimann2009}.
Here, we refine our techniques, allowing us to improve
on all edit-distance computation bounds discussed above. Our
first result of this paper is for the case of rational scoring
functions:
\begin{theorem}
\label{Theorem: First Result}%
Let $\mathcal{A}$ and $\mathcal{B}$ be two SLPs of total size
$n$ that respectively generate two strings $A$ and $B$ of total
length $N$. Then, given $\mathcal{A}$ and $\mathcal{B}$, one can
compute the edit-distance between $A$ and $B$ in $O(nN \log
(N/n))$ time for any rational scoring function.
\end{theorem}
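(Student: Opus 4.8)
The plan is to exploit the recursive structure of SLPs. An SLP is a context-free grammar in Chomsky normal form generating exactly one string, so each rule is either a terminal $X_i \to a$ or a concatenation $X_i \to X_j X_k$. The strategy is to process the rules of both SLPs in topological order, building up information about the edit-distance interactions between substrings generated by the grammar variables. The central object is the \emph{DIST matrix}: for a fixed substring $B$ and a fixed substring generated by some variable in $\mathcal{A}$, the DIST matrix encodes, for every pair of boundary positions, the optimal alignment cost of a contiguous block of the dynamic programming (DP) table. The key algebraic fact is that DIST matrices compose under a $(\min,+)$ (tropical) matrix product: the DIST matrix of a concatenation $X_j X_k$ can be obtained from the DIST matrices of $X_j$ and $X_k$.

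\medskip

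First I would set up the alignment graph / DP table and recall that the edit-distance computation decomposes into computing, for each variable of $\mathcal{A}$ against each relevant substring of $B$, the input/output boundary values of the corresponding block. Because each variable of $\mathcal{A}$ that appears in the derivation corresponds to a repeated substring, computing its DIST matrix once and reusing it is exactly where the compression savings come from: instead of $O(N)$ rows of the DP table we only pay per SLP variable. The number of variables is $O(n)$, and each DIST matrix has dimension governed by the length of the generated substring, so a naive tropical product of two such matrices would be too expensive and must be accelerated.

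\medskip

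The main obstacle, and the crux of obtaining the $O(nN\log(N/n))$ bound rather than something worse, is computing the $(\min,+)$ product of the DIST matrices quickly. Here I would invoke the \emph{total monotonicity} (the Monge/inverse-Monge property) that DIST matrices are known to satisfy: because they arise from an alignment DP, their entries obey the Monge condition, and the tropical product of two Monge matrices is again Monge. Exploiting Monge structure, one can compute the relevant column-minima via SMAWK-style techniques, and more importantly one can represent and multiply these matrices succinctly. The $\log(N/n)$ factor should emerge from a balanced merging schedule over the grammar: summing the cost of the tropical products along the derivation tree, where variable $X_i$ generating a substring of length $\ell_i$ contributes roughly $O(\ell_i)$ work amortized, and the depth/weighting of the grammar contributes the logarithmic factor. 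I expect to need a careful accounting argument (a convexity or Jensen-type bound on $\sum \ell_i \log(\cdot)$ subject to $\sum$ of sizes being $O(n)$ and total length $N$) to convert the per-variable costs into the claimed global bound.

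\medskip

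Finally I would assemble the pieces: process the SLP for $\mathcal{A}$ bottom-up, maintaining for each variable a succinct Monge representation of its DIST matrix against $B$ (or against the relevant substructure of $\mathcal{B}$), use the fast Monge tropical product at each concatenation rule, and read off the global edit-distance from the DIST matrix of $\mathcal{A}$'s start symbol. The rationality of the scoring function enters precisely to guarantee the Monge/total-monotonicity structure and to keep the matrix entries well-behaved under the $(\min,+)$ product; this is the feature the $O(n^{2/3}N^{4/3})$ arbitrary-scoring result cannot use, which is why it relies on a coarser partitioning rather than fast Monge multiplication. I anticipate the hardest and most technical step is the amortized analysis of the merging cost yielding exactly $\log(N/n)$, together with establishing that the succinct DIST representations compose without losing the Monge property.
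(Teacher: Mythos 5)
There is a genuine gap. Your scheme---process $\mathcal{A}$'s grammar bottom-up, maintaining for each variable a succinct Monge representation of its DIST matrix against (all of) $B$, and merging at each concatenation rule---is essentially Tiskin's algorithm, and it cannot do better than $O(nN\log N)$. The reason is structural: the DIST matrix of a variable's string against $B$ has dimension governed by the \emph{sum} of the two string lengths (its boundary contains the entire boundary with $B$), so every such matrix has dimension $\Omega(N)$ regardless of how short the variable's own string $\ell_i$ is. Hence each of the $n$ merges costs $\Omega(N)$ (and $\Theta(N\log N)$ with fast Monge/unit-Monge multiplication), and no amortization can charge a variable only $O(\ell_i)$ work; your hoped-for Jensen-type accounting has no way to get off the ground. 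The distinction between $\log N$ and $\log(N/n)$ is not cosmetic: when $n=\Theta(N)$ the claimed bound is $O(N^2)$, whereas your approach gives $O(N^2\log N)$, which is worse than the trivial algorithm.

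The missing idea is to cap the size of every DIST matrix at $x=N/n$ by partitioning \emph{both} strings, not just recursing on $\mathcal{A}$. The paper constructs an $x$-partition: using the parse trees, $A$ and $B$ are each cut into $O(N/x)$ disjoint substrings of length $O(x)$, every substring associated with a grammar variable (either derived exactly by it, or a prefix/suffix of its derivation picked off a path between ``key vertices''). Because associated substrings are determined by their variables, there are only $O(n^2)$ distinct substring pairs, so a repository of all needed DIST tables can be built with $O(n^2)$ merges of matrices of dimension $O(x)$, costing $O(n^2 x\log x)$ total under rational scoring. The global answer is then obtained not from a root DIST matrix but by sweeping the $O(N^2/x^2)$ blocks of the dynamic-programming grid, using each block's (precomputed) succinct DIST to propagate boundary values in $O(x\log x)$ time per block, i.e.\ $O(N^2\log x/x)$ overall. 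Balancing $O(n^2x\log x + N^2\log x/x)$ at $x=N/n$ gives $O(nN\log(N/n))$; the $\log(N/n)$ factor comes from the capped matrix dimension, not from the depth of the grammar. One smaller correction: rationality is not what guarantees Monge/total monotonicity---that holds for arbitrary scoring functions and is what the paper's $O(n^{2/3}N^{4/3})$ result exploits via SMAWK. Rationality is needed for Schmidt's succinct $O(x)$-space DIST representation and for Tiskin's $O(x\log x)$ merge, i.e.\ for the speed of the operations, not for their correctness.
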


As arbitrary scoring functions are especially important for
biological applications, we obtain the following result for
arbitrary scoring functions:
\begin{theorem}
\label{Theorem: Second Result}%
Let $\mathcal{A}$ and $\mathcal{B}$ be two SLPs of total size
$n$ that respectively generate two strings $A$ and $B$ of total
length $N$. Then, given $\mathcal{A}$ and $\mathcal{B}$, one can
compute the edit-distance between $A$ and $B$ in
$O(n^{2/3}N^{4/3})$ time for any arbitrary scoring function.
\end{theorem}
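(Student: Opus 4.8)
The plan is to reuse the grid/shortest-path viewpoint underlying Theorem~\ref{Theorem: First Result} and to isolate exactly the one place where arbitrary scoring functions cost more than rational ones. Recall that the edit distance between $A$ and $B$ is the cost of a cheapest monotone staircase path in the $(|A|+1)\times(|B|+1)$ alignment grid $G$, where $G[i,j]$ is the edit distance of the length-$i$ and length-$j$ prefixes. For any axis-aligned sub-rectangle (a \emph{block}) of $G$, the boundary-to-boundary optimal distances form a \emph{DIST matrix}, which is totally monotone (Monge) for \emph{any} scoring function, because optimal alignment paths in a planar grid do not cross. Consequently, given the input boundary (top and left entries) of a block together with its DIST matrix, one obtains the output boundary (bottom and right entries) by a single min-plus matrix--vector product, computable in time linear in the boundary length by the SMAWK algorithm. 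The rational and arbitrary cases differ only in how cheaply one may \emph{construct and combine} these DIST matrices: the fast unit-Monge multiplication exploited for Theorem~\ref{Theorem: First Result} is unavailable for real weights, so I must accept a quadratic cost per combination and compensate with a length threshold.

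First I would cut both strings at a scale $x$ (to be fixed) using the SLP derivation trees: choosing the maximal nonterminals whose expansions have length at most $x$ yields, for each string, a partition into $O(N/x)$ pieces, each of length at most $x$, and -- crucially -- only $O(n)$ \emph{distinct} piece types, since each such type is the expansion of one of the $O(n)$ nonterminals. This partitions $G$ into an $O(N/x)\times O(N/x)$ array of macro-blocks, every one of which is an (A-piece, B-piece) pair drawn from only $O(n^2)$ distinct types. For each distinct type I would precompute its DIST matrix, of dimension $O(x)$: treating the two pieces as strings of at most $x$ characters and building the matrix by balanced binary min-plus merging of single-character DIST matrices costs $O(x^2)$ per block, dominated by the top-level merge. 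Here the SLP is what bounds the number of matrices actually built to $O(n^2)$ rather than $\Theta((N/x)^2)$; without it the construction alone would already be quadratic. The construction phase thus takes $O(n^2 x^2)$ time, and this is the step where arbitrary weights genuinely hurt, since no near-linear merge is available.

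Next I would run a dynamic program over the macro-block array in row-major (or any topological) order: for each macro-block I feed its already-known top and left boundary into the precomputed DIST matrix of its type and read off the bottom and right boundary, spending $O(x)$ time per macro-block via SMAWK on the Monge matrix. Over all $O((N/x)^2)$ macro-blocks this propagation costs $O(N^2/x)$, and the bottom-right entry of $G$ is the sought edit distance. The total running time is therefore $O\!\left(n^2 x^2 + N^2/x\right)$; balancing the two terms gives $x = (N/n)^{2/3}$ and the claimed bound $O(n^{2/3}N^{4/3})$, which is valid whenever $n \le N$ (so that $1 \le x \le N$).

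The main obstacle I anticipate is the careful accounting of the construction phase. I must verify that the frontier cut really produces only $O(n)$ distinct piece types at scale $x$ while keeping every piece of length $O(x)$, that the block DIST matrices are genuinely Monge under an arbitrary (non-rational) scoring function so that SMAWK is legitimate in both the build and the propagation, and that reusing one DIST matrix per type across all of its occurrences in the macro-grid does not corrupt the boundary propagation. Once these structural facts are in place, the two-term cost and its optimization are routine, and the only essential loss relative to Theorem~\ref{Theorem: First Result} is the $O(x^2)$-per-block merge forced by arbitrary weights.
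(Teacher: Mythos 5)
Your overall architecture (an $x$-partition of the grid, a repository of $O(n^2)$ distinct $DIST$ tables built in $O(n^2x^2)$ time, $O(N^2/x)$ boundary propagation via SMAWK, and the balance $x=(N/n)^{2/3}$) is exactly the paper's, but there is a genuine gap in the one step you treat as nearly routine: the partition. Choosing the maximal nonterminals whose expansions have length at most $x$ does \emph{not} yield $O(N/x)$ pieces. In an unbalanced parse tree, the nodes whose expansions are longer than $x$ can form a long spine in which each node has one child of tiny expansion; your frontier then consists of those hanging children, and there can be $\Theta(N)$ of them, each of length $O(1)$. For instance, take a caterpillar $Y_{i+1}\rightarrow Y_i Z$ with $Z$ deriving a single letter and $Y_1$ deriving a length-$x_0$ string via a balanced sub-SLP, iterate it $x_0$ times, and put a balanced tree of copies on top; this gives $n=O(\sqrt{N})$, yet for any $x<x_0$ your frontier has $\Theta(N)$ pieces. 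On such inputs your propagation phase costs $\Theta(N^2)$ regardless of the choice of $x$, while the claimed bound is $O(N^{5/3})$, so the proof fails. This is precisely the difficulty the paper's Section~\ref{Section: Constructing the x-partition} is built to solve: it takes the key-vertices (expansion longer than $x$, both children shorter) and then \emph{greedily concatenates} the short substrings hanging between consecutive key-vertices into chunks of length between $x$ and $2x$, associating each chunk with a spine variable $v_i^j$ that derives only a \emph{superstring} of it. One must then argue, as the paper does, that every occurrence of $v_i^j$ is associated with the same chunk (so the number of distinct substring pairs stays $O(n^2)$), and the repository construction must treat these ``second type'' substrings specially, merging $DIST$s incrementally along the spine, since they are not the expansion of any single variable.

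A secondary accounting error: building one $O(x)\times O(x)$ $DIST$ table from scratch by balanced binary min-plus merging costs $O(x^2\log x)$, not $O(x^2)$. Splitting both strings gives the recurrence $T(x)=4T(x/2)+O(x^2)$, whose levels all contribute equally, so the top-level merge does not dominate; this matches the $O(x^2\lg x)$ construction of Apostolico \emph{et al.} cited in Section~\ref{Section: Block Edit-Distance via SLPs}. Done independently for each of the $O(n^2)$ types, your construction phase is $O(n^2x^2\log x)$, which degrades the final bound. The paper avoids the extra factor by recursing along the SLP structure itself and memoizing: the four sub-$DIST$s needed for a pair of variables are $DIST$s of pairs of \emph{child variables}, already stored in the repository, so the entire repository costs only $O(n^2)$ merges, i.e., $O(n^2x^2)$ in total.
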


The reader should compare Theorem~\ref{Theorem: First Result}
to the $O(nN \log N)$ algorithm of Tiskin~\cite{Tiskin2009}, and
Theorem~\ref{Theorem: Second Result} to the $O(n^{1.34}N^{1.34})$ algorithm
of~\cite{HermelinLandauLandauWeimann2009}. In both cases, our
algorithms do not surpass the quadratic bound of $O(N^2)$, even
in the worst case when $n = \Theta(N)$.
There are two main ingredients which we make use of in this paper to
obtain the improvements discussed above:
\begin{enumerate}
\item The recent improvements on $DIST$ merges presented by
    Tiskin~\cite{Tiskin2009}.
\item A refined partitioning of the input strings into
    repeated patterns.
\end{enumerate}
The second ingredient is obtained by much lesser stringent
requirements of the desired partitioning. This has the
advantage that such a partitioning always exists, yet it adds
other technical difficulties which make the version presented
in this sequel more complex. In particular, the construction of
the repository of $DIST$ tables as shown
in~\cite{HermelinLandauLandauWeimann2009} requires a more
careful and detailed analysis.

\subsection{Related Work}
\label{Subsection: Related Work}

Rytter \emph{et al.}~\cite{KarpinskiRytterShinohara1995} were
the first to consider SLPs in the context of pattern matching,
and other subsequent papers also followed this
line~\cite{LehmanShelat2002,MiyazakiShinoharaTakeda1997}.
In~\cite{Rytter2003} and~\cite{Lifshits2007} Rytter and
Lifshits took this work one step further by proposing SLPs as a
general framework for dealing with pattern matching algorithms
that are accelerated via compression. However, the focus of
Lifshits was on determining whether or not these problems are
polynomial in $n$ or not. In particular, he gave an
$O(n^3)$-time algorithm to determine equality of
SLPs~\cite{Lifshits2007}, and he established hardness for the
edit distance~\cite{LifshitsLohrey2006}, and even for the
hamming distance problems~\cite{Lifshits2007}. Nevertheless,
Lifshits posed as an open problem the question of whether or
not there is an $O(nN)$ edit-distance algorithm for SLPs. Here, we attain a bound which is only $\log(N/n)$ away from this.

\subsection{Paper Organization}
\label{Subsection: Paper Organization}

The rest of the paper is organized as follows: In the following
section we present some notation and terminology, and give a
brief discussion of edit-distance computation and SLPs.
Section~\ref{Section: Block Edit-Distance via SLPs} then gives an overview of
the block edit-distance algorithm which is the framework on
which both our algorithms are developed. The two preceding
sections, Sections~\ref{Section: Constructing the x-partition}
and~\ref{Section: Constructing the DIST Repository}, are devoted to
explaining how to take advantage of the SLP representations in
the block edit-distance algorithm. In particular,
Section~\ref{Section: Constructing the x-partition} explains
how to construct a partitioning of the two input strings that
has a very convenient structure, and Section~\ref{Section: Constructing the DIST Repository} explains how to exploit this
structure in order to efficiently construct a repository of
$DIST$ tables to be used in the block edit-distance algorithm.
Finally, in Section~\ref{Section: Putting It All Together}, we complete
all necessary details for proving Theorems~\ref{Theorem: First
Result} and~\ref{Theorem: Second Result}.

\section{Preliminaries}
\label{Section: Preliminaries}

We next present some terminology and notation that will be used
throughout the paper. In particular, we discuss basic concepts
regarding edit-distance computation and straight-line
programs.\\

\noindent \textbf{Edit Distance:} The \emph{edit distance}
between two strings over a fixed alphabet $\Sigma$ is the
minimum cost of transforming one string into the other via a
sequence of character deletion, insertion, and replacement
operations~\cite{WagnerFischer1974}. The cost of these
elementary editing operations is given by some scoring function
which induces a metric on strings over $\Sigma$. The simplest
and most common scoring function is the Levenshtein
distance~\cite{Levenshtein1966} which assigns a uniform cost
of 1 for every operation.

The standard dynamic programming solution for computing the
edit distance between a pair of strings $A=a_1 a_2 \cdots a_N$
and $B=b_1 b_2 \cdots b_N$ involves filling in an $(N+1) \times
(N+1)$ table $T$, with $T[i,j]$ storing the edit distance
between $a_1 a_2 \cdots a_i$ and $b_1 b_2 \cdots b_j$. The
computation is done according to the base case rules given by
$T[0,0] = 0$, $T[i,0] = T[i-1,0] $ + the cost of deleting
$a_i$, and $T[0,j] = T[0,j-1]$  + the cost of inserting $b_j$,
and according to the following dynamic~programming~step\footnote{We note that in most cases, including the Levenshtein
distance~\cite{Levenshtein1966}, when $a_i = b_j$, the cost of replacing $a_i$ with $b_j$ is zero.}:
\begin{equation}
\label{Equation: Edit distance DP}%
T[i,j] = \min
\begin{cases}
T[i-1,j] \text{ + the cost of
deleting $a_i$}\\
T[i,j-1] \text{ + the cost of
inserting $b_j$}\\
T[i-1,j-1] \text{ + the cost of
replacing $a_i$ with $b_j$}\\
\end{cases}
\end{equation}
Note that as $T$ has $(N+1)^2$ entries, the time complexity of
the algorithm above is $O(N^2)$.\\

\noindent \textbf{Straight-line programs:} A
\emph{straight-line program} (SLP) is a context free grammar
generating exactly one string. Moreover, only two types of
productions are allowed: $X_i\rightarrow a$ where $a$ is a
unique terminal, and $X_i\rightarrow X_pX_q$ with \mbox{$i>
p,q$} where $X_1,\ldots, X_n$ are the grammar variables. Each
variable appears exactly once on the left hand side of a
production. In this way, the production rules induce a
rooted ordered tree over the variables of the SLPs, and we can
borrow tree terminology (\emph{e.g.} left-child, ancestor,...)
when speaking about the variables of the SLP. The string
represented by a given SLP is a unique string that is derived by
the last nonterminal $X_n$. We define the size of an SLP to be
$n$, the number of variables it has (which is linear in the
number of productions). The length of the strings that is
generated by the SLP is denoted by $N$. It is important to
observe that SLPs can be exponentially smaller than the
string they generate.

\begin{example}
Consider the string $abaababaabaab$. It could be generated by
the following SLP:\\\\
\medskip
\qquad \qquad \parbox{5cm}{ $X_1\rightarrow b$
 \par $X_2\rightarrow a$
 \par $X_3\rightarrow X_2X_1$
 \par $X_4\rightarrow X_3X_2$
 \par $X_5\rightarrow X_4X_3$
 \par $X_6\rightarrow X_5X_4$
 \par $X_7\rightarrow X_6X_5$
}
 \parbox{8cm}{
\includegraphics[scale=.5]{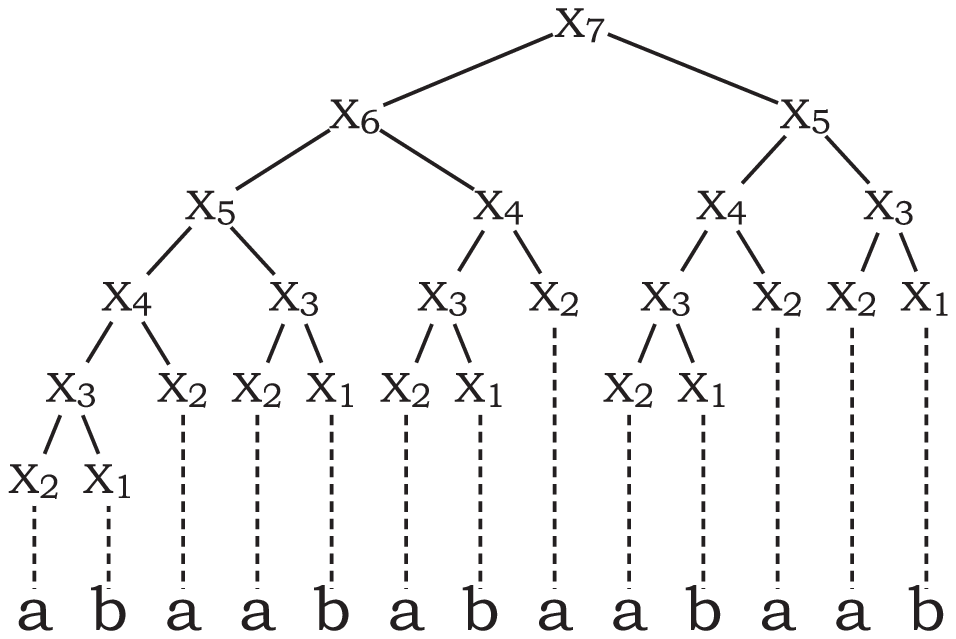}
}
\end{example}

Rytter~\cite{Rytter2003} proved that the resulting encoding of
most compression schemes can be transformed to straight-line
programs quickly and without large expansion. In particular, consider an LZ77
encoding~\cite{LZ77} with $n'$ blocks for a string of length
$N$. Rytter's algorithm produces an SLP-representation with
size $n = O(n' \log N)$ of the same string, in $O(n)$ time.
Moreover, $n$ lies within a $\log N$ factor from the size of a
\emph{minimal} SLP describing the same string. This gives us an
efficient logarithmic approximation of minimal SLPs, since
computing the LZ77 encoding of a string can be done in
linear time. Note also that any string compressed by the
LZ78-LZW encoding~\cite{LZ78} can be transformed directly into a
straight-line program within a constant factor.

\section{The Block Edit-Distance Algorithm}
\label{Section: Block Edit-Distance via SLPs}

In the following section we describe a generic framework for
compression based acceleration of edit distance computation
between two strings called the \emph{block edit-distance}
algorithm. This framework generalizes many
compression based algorithms including the Masek and Paterson
algorithm~\cite{MasekPaterson1980}, and the algorithms
in~\cite{ArbellLandauMitchell2001,CrochemoreLandauZiv-Ukelson2003},
and it will be used for explaining our algorithms in
Theorem~\ref{Theorem: First Result} and Theorem~\ref{Theorem:
Second Result}.

Consider the standard dynamic programming formulation
(\ref{Equation: Edit distance DP}) for computing the
edit-distance between two strings $A=a_1 a_2 \cdots a_N$ and
$B=b_1 b_2 \cdots b_N$. The \emph{dynamic programming grid}
associated with this program, is an acyclic directed graph
which has a vertex for each entry of $T$ (see
Fig.~\ref{Figure: DIST}). The vertex corresponding to $T[i,j]$
is associated with $a_i$ and $b_j$, and has incoming edges
according to (\ref{Equation: Edit distance DP}) -- an edge from
$T[i-1,j]$ whose weight is the cost of deleting $a_i$, an edge
from $T[i,j-1]$ whose weight is the cost of inserting $b_j$,
and an edge from $T[i-1,j-1]$ whose weight is the cost of
replacing $a_i$ with $b_j$. The \emph{value} at the vertex
corresponding to $T[i,j]$ is the value stored in $T[i,j]$,
\emph{i.e.} the edit-distance between the length $i$ prefix of
$A$ and the length $j$ prefix of $B$. Using the
dynamic programming grid $G$, we reduce the problem of
computing the edit-distance between $A$ and $B$ to the problem
of computing the weight of the lightest path from the
upper-left corner to bottom-right corner in $G$.

\begin{figure}[h!]
\begin{center}
\includegraphics[scale=0.6]{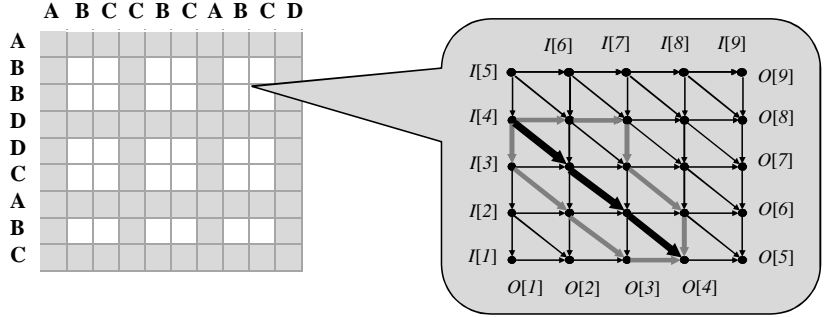}
\end{center}
\caption{A graphical depiction of the Levenshtein distance
dynamic programming grid. On the left, an $x$-partition of the grid.
The boundaries of blocks are shaded in gray. On the right, the $DIST$
table corresponding to the upper-rightmost block in the partition. The
substrings corresponding to this block are ``ABCD" in $A$, and ``ABBD"
in~$B$.}
\label{Figure: DIST}
\end{figure}

We will work with sub-grids of the dynamic programming grid
that will be referred to as \emph{blocks}. The \emph{input
vertices} of a block are all vertices in the first row and
column of the block, while its \emph{output vertices} are all
vertices in the last row and column. Together, the input and
output vertices are referred to as the \emph{boundary} of the
block. The substrings of $A$ and $B$ associated with a block
are defined in the straightforward manner according to its
first row and column. Also, for convenience purposes, we will
order the input and output vertices, with both orderings
starting from the vertex in bottom-leftmost corner of the
block, and ending at the vertex in the upper-rightmost corner.
The $i$th input vertex and $j$th output vertex are the $i$th
and $j$th vertices in these orderings as depicted in Fig.~\ref{Figure: DIST}.

\begin{definition}[\boldmath{$x$}-partition]
Given a positive integer $x \leq N$, an $x$-partition of $G$ is
a partitioning of $G$ into disjoint blocks such that every
block has boundary of size $O(x)$, and there are $O(N/x)$
blocks in each row and column.
\end{definition}

The central dynamic programming tool used by the block
edit-distance algorithm is the $DIST$ table, an elegant and handy
data structure which was originally introduced by Apostolico
\emph{et al.}~\cite{ApostolicoAtallahLarmoreMcFaddin1990}, and
then further developed by others
in~\cite{CrochemoreLandauZiv-Ukelson2003,Schmidt1998} (see
Fig.~\ref{Figure: DIST}).

\begin{definition}[\boldmath{$DIST$}~\cite{ApostolicoAtallahLarmoreMcFaddin1990}]
Let $G'$ be a block in $G$ with $x$ input vertices and $x$
output vertices.  The $O(x^2)$  \emph{\textrm{DIST} table}
corresponding to $G'$ is an $x \times x$ matrix, with
$DIST[i,j]$ storing the weight of the minimum weight path from
the $i$th input to the $j$th output in $G$, and
$\infty$ if no such paths exists.
\end{definition}

In the case of rational scoring
functions, Schmidt~\cite{Schmidt1998} was the first to identify
that the $DIST$ table can be succinctly represented using
$O(x)$ space, at a small cost to query access time. In~\cite{Schmidt1998} the author took advantage of the fact that the number of relevant changes from one column to the next in the $DIST$ matrix is constant. Therefore, the $DIST$ matrix can be fully represented using only the $O(x)$ relevant points, which requires only $O(x)$ space.

\begin{definition}[Succinct \boldmath{$DIST$}~\cite{Schmidt1998}]
A succinct representation of an $x \times x$ $DIST$ table is a
data structure requiring $O(x)$ space, where the value
$DIST[i,j]$, given any $i,j \in \{1,\ldots,x\}$, can be queried
in $O(\log^2 x)$ time.
\end{definition}

It is important to notice that the values at the output
vertices of a block are completely determined by the values at
its input and its corresponding $DIST$ table. In particular, if
$I[i]$ and $O[j]$ are the values at the $i$th input vertex and
$j$th output vertex of a block $G'$ of $G$, then
\begin{equation}
\label{Equation: Block Output}%
O[j] = \min_{1 \leq i \leq x} (I[i] + DIST[i,j]).
\end{equation}
By (\ref{Equation: Block Output}), the values at the output
vertices of $G'$ are given by the column minima of the matrix
$I + DIST$. Furthermore, by a simple modification of all
$\infty$ values in $I + DIST$, we get what is known as a
\emph{totally monotone
matrix}~\cite{CrochemoreLandauZiv-Ukelson2003}. Aggarwal
\emph{et al.}~\cite{AggarwalKlaweMoranShorWilber1987} gave an elegant recursive algorithm, nicknamed SMAWK in the literature,
that computes all column minima of an $x \times x$
totally monotone matrix by querying only $O(x)$ elements of the
matrix. It follows that using SMAWK we can compute the output
values of $G'$ in $O(x)$ time. We will be using this mechanism
for the arbitrary scoring function. However, for the case of rational scoring functions, using SMAWK on the
succinct representation of $DIST$ requires $O(x\log^2 x)$ time.
Tiskin~\cite{Tiskin2009} showed how to reduce this to $O(x\log x)$ using a simple
divide-and-conquer approach.

In addition, let us now discuss how to efficiently construct the $DIST$ table corresponding to a block in $G$. Observe that this can be done quite easily in $O(x^3)$ time, for blocks with boundary size $O(x)$, by computing the standard dynamic programming table between every prefix of $A$ against $B$ and every prefix of $B$ against $A$. Each of these dynamic programming tables contains all values of a particular row in the $DIST$ table. In~\cite{ApostolicoAtallahLarmoreMcFaddin1990}, Apostolico \emph{et al.} show an elegant way to reduce the time complexity of this construction to $O(x^2 \lg x)$. In the case of rational scoring functions, the complexity can be further reduced to $O(x^2)$ as shown by Schmidt~\cite{Schmidt1998}.\\\\

\noindent \textbf{Block Edit Distance}
\begin{enumerate}
\item Construct an $x$-partition of the dynamic programming
    grid of $A$ and $B$, with some $x \leq N$ to be chosen
    later.
\item Construct a repository with the $DIST$ tables
    corresponding to each block in the $x$-partition.
\item Fill in the first row and column of $G$ using the
    standard base case rules.
\item In top-to-bottom and left-to-right manner, identify
    the next block in the partition of $G$ and use its
    input and the repository to compute its output using
    (\ref{Equation: Block Output}). Use the outputs in
    order to compute the inputs of the next block using
    (\ref{Equation: Edit distance DP}).
\item The value in the bottom-rightmost cell is the edit
    distance of $A$ and $B$.
\end{enumerate}

The first two steps of the block edit-distance algorithm are
where we actually exploit the repetitive structure of the input
strings induced by their SLP representations. These will be
explained in further detail in Sections~\ref{Section:
Constructing the x-partition} and~\ref{Section: Constructing
the DIST Repository}. Apart from these two steps, all details
necessary for implementing the block edit-distance algorithm
should by now be clear. Indeed, steps 3 and 5 are trivial, and
step 4 is computed differently for rational scoring schemes and
arbitrary scoring schemes according to the above discussion.
Since there are $O(N^2/x^2)$ blocks induced by an $x$-partition,
this step requires $O(N^2/x)$ time for arbitrary scoring
functions, and $O(N^2 \lg x /x)$ time when dealing with
succinct $DIST$ tables in the case of rational scoring
functions.


\section{Constructing the $\boldmath{x}$-partition}
\label{Section: Constructing the x-partition}

In this section we explain how to construct an $x$-partition of the dynamic programming grid $G$. Each block in this partition, in addition to being associated with a substring $A'$ of $A$ and a substring $B'$ of $B$, is also associated with unique grammar variables $v\in\mathcal{A}$ and
$u\in\mathcal{B}$. 
It is possible that $A'$ is only a prefix or a suffix of the string that is derived from $v$. Similarly, it is possible that $B'$ is only a prefix or a suffix of the string that is derived from $u$. However, every pair of grammar variables $(v\in\mathcal{A}, u\in\mathcal{B})$ is only associated with  one pair of substrings $(A',B')$. Notice that in the $x$-partition, it is possible that more than one block is associated with substrings $(A',B')$. This is due to the inherent repetitions in the parse tree that are crucial for the efficiency of our algorithm.

\begin{figure}[t]
\begin{center}
\parbox{5cm}{\includegraphics[scale=0.65]{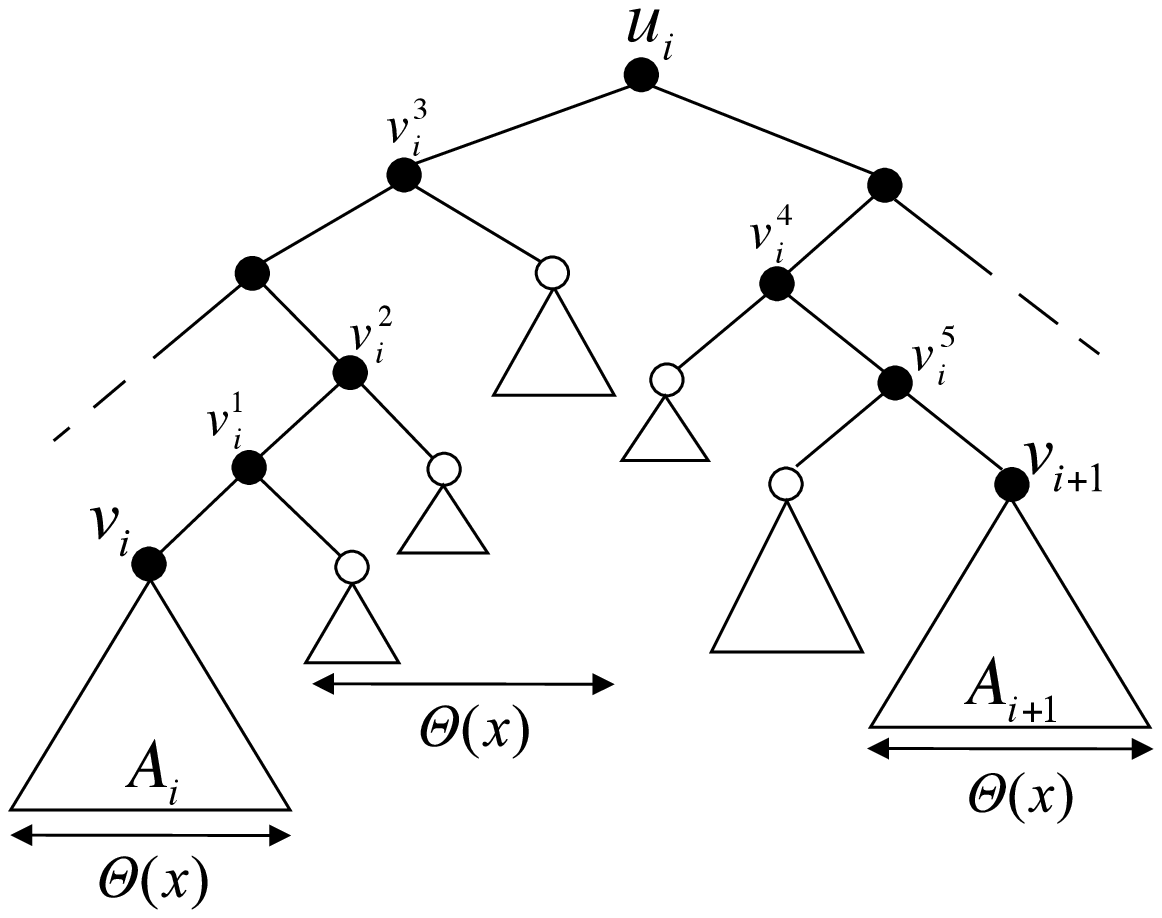}}
\parbox{3cm}{\ \ \   }
\parbox{12cm}{The key-vertices
$v_i$ and $v_{i+1}$ both derive strings of length $\Theta(x)$, and their least common ancestor is $u_i$. The white vertices ``hanging'' off the $v_i$-to-$v_{i+1}$ path are variables that together derive the substrings yet to be covered. Notice that the white vertices derive strings of length shorter than $x$.
In the final partition, $v_i^3$ is associated with an actual substring in the partition. $v_i^2$ is associated with a prefix of this substring.
}
\end{center}
\caption{A closer look on the parse tree of an SLP $\mathcal{A}$.}
 \label{Figure: keyvertices}
\end{figure}

To describe the $x$-partition procedure, we explain how $A$ is partitioned into substrings, $B$ is partitioned similarly.
We show that for every SLP $\mathcal{A}$
generating a string $A$ and every $x\le N$, one can partition
$A$ into $O(N/x)$ disjoint substrings, each of length $O(x)$,
such that every substring is the complete or partial generation of some variable in
$\mathcal{A}$. The outline of the partition process is as follows: 
\begin{enumerate}
\item We first identify the grammar variables in
    $\mathcal{A}$ which generate a disjoint substring of length
    between $x$ and $2x$. There are at most $O(N/x)$ such variables.
\item We then show that the substrings of $A$ that are still not
    associated with a variable can each be generated by $O(N/x)$
    additional variables. Furthermore, while, these additional $O(N/x)$ variables may generate a string of length greater than $x$, we will show how to extract only the desired substring from the string that they generate. We add all such
    variables to our partition of $A$ for a total of $O(N/x)$ variables. 
\end{enumerate}
    
To clarify, lets look at the example depicted in Fig.~\ref{Figure: keyvertices}. In step 1 described above, the vertexes $v_i$ and $v_{i+1}$ are selected. In step 2, vertices $v_i^3$ and $v_i^4$ are selected. For the latter, we will use only a portion of the strings generated by these variables as needed by the algorithm.

We now give a detailed description of the partition process. To partition $A$, consider the parse tree of $\mathcal{A}$ as depicted in Fig.~\ref{Figure: keyvertices}.
We begin by assigning every grammar variable (vertex) that derives a string shorter than $x$ with the exact string that it derives.
We continue by marking every vertex $v$ that derives a string
of length greater than $x$ as a {\em key-vertex} iff both children of
$v$ derive strings of length smaller than $x$. This gives
us $ O(N/x)$ key-vertices, such that each derives a
string of length $\Theta(x)$. We then partition $A$ according to these vertices. In particular, we associate each key-vertex with the exact string that it derives.
But we are still not guaranteed that the key-vertices cover $A$ entirely.

To fix this, we take a look at two key-vertices $v_i,v_{i+1}$
selected in the above process as seen in Fig.~\ref{Figure:
keyvertices}. Assume $v_i$ derives the string $A_i$ and
$v_{i+1}$ derives the string $A_{i+1}$, and
that $A'_i$ is the ``missing'' substring of $A$ that lies
between $A_{i}$ and $A_{i+1}$. Note that both $A_i$ and
$A_{i+1}$ are of length $\Theta(x)$, $A'_i$ however, can be
either longer than $\Theta(x)$ or shorter.
We now go on to show
how to partition $A'_i$ into $O(|A'_i|/x)$ substrings of length $O(x)$ each.

Let $u_i$ be the lowest common ancestor of $v_i$ and $v_{i+1}$, and let
$v_{i}^1, \ldots,  v_{i}^s$ (resp. $v_{i}^{s+1}, \ldots, v_{i}^{t}$) be the vertices, not including $u_i$, on the unique $v_i$-to-$u_i$ (resp. $u_i$-to-$v_{i+1}$) path whose right (resp. left) child is also on the path. We focus on partitioning the substring of $A'_i$ corresponding to the $v_i$-to-$u_i$ path (partitioning the $u_i$-to-$v_{i+1}$ part is done symmetrically). The following procedure partitions this part of $A'_i$ and associates every one of $v_{i}^1,v_{i}^2,\ldots,v_{i}^s$ with a substring.

\begin{enumerate}
\item initialize $j=1$
\item while $j\le s$
\item \ \ \ \ \ associate $v_i^j$ with the string derived by its right child, and initialize $S$ to be this substring
\item \ \ \ \ \ while $|S| < x$ and $j\neq s$
\item  \ \ \ \ \ \ \ \ \ \ \ \  concatenate the string derived by  $v_i^{j+1}$'s  right child to $S$
\item  \ \ \ \ \ \ \ \ \ \ \ \  associate the new $S$ with $v_i^{j+1}$
\item  \ \ \ \ \ \ \ \ \ \ \ \  $j \leftarrow j+1$
\item  \ \ \ \ \ set $S$ as a string in the $x$-partition
\end{enumerate}

It is easy to verify that the above procedure partitions  $A'_i$ into $O( |A'_i|/x$) substrings, where one substring can be shorter than $x$ and all the others are of length between $x$ and $2x$. Therefore, after applying this procedure to all  $A'_i$s, $A$ is partitioned into $O(N/x)$ substrings each of length $O(x)$.
It is also easy to see that we can identify the key-vertices, as well as the $v_i^j$ vertices in $O(N)$ time.

An important observation that we point out is that the basic structure of an SLP grammar constitutes that every internal node in the tree represents a variable in the grammar and a grammar variable of the form $v_i^j$ is always associated with the {\em same} substring $S$ (and $|S|\le 2x$). Due to the bottom-up nature of the above process, the same respective key-vertices in the subtree of a given variable, will be chosen for any appearance of that variable in the tree. This is because in every place in the parse tree where $v_i^j$ appears, the subtree rooted at $v_i^j$ is exactly the same, so the above (bottom-up) procedure would behave the same.

%

\section{Constructing the $\boldmath{DIST}$ Repository}
\label{Section: Constructing the DIST Repository}

In the previous section, we have
seen how to construct an $x$-partition of the
dynamic programming grid $G$. Once this partition has been
built, the first step of the block edit-distance procedure is
to construct a repository of $DIST$ tables corresponding to
each block of the partition. In this section we discuss how to
construct this repository of $DIST$ tables efficiently.

We will be building the $DIST$ tables by utilizing the process
of merging two $DIST$ tables. That is, if $D_1$ and $D_2$ are
two $DIST$ tables, one between a pair of substrings $A'$ and
$B'$ and the other between $A'$ and $B''$, then we refer to the
$DIST$ table between $A'$ and $B'B''$ as the product of
\emph{merging} $D_1$ and $D_2$. For an arbitrary scoring
function, merging two $x\times x$ $DIST$ tables requires $O(x^2)$
time using $x$ iterations of the SMAWK algorithm discussed in
Section~\ref{Section: Block Edit-Distance via SLPs}. For the
rational scoring function, a recent important result of
Tiskin~\cite{Tiskin2009} shows how using the succinct
representation of $DIST$ tables, two $DIST$ tables can be
merged in $O(x \lg x)$ time.

Recall that the $x$-partitioning step described in
Section~\ref{Section: Constructing the x-partition}
associates SLP variables with substrings in $A$ and $B$. Each
substring which is associated with a variable is of length
$O(x)$. There are two types of associated substrings: Those
whose associated SLP variable derives them exactly, and those
whose associated variable derives a superstring of them. As a first step, we construct the
$DIST$ tables between any pair of substrings $A'$ and $B'$ in
$A$ and $B$ that are associated with a pair of SLP variables,
and are of the first type.
This is done in a recursive manner. We first construct
the  four $DIST$ tables that correspond to
the pairs $(A'_1,B'_1)$, $(A'_1,B'_2)$, $(A'_2,B'_1)$, and
$(A'_2,B'_1)$, where $A'_1$ and $A'_2$ (resp. $B'_1$ and $B'_2$) are the strings derived by the left and right
children of $A'$'s (resp. $B'$'s) associated variable.
We then merge these four $DIST$s together
to obtain a $DIST$ between $A'$ and $B'$.

Now assume $A'$ is a substring in $A$ of the second type, and
we want to construct the $DIST$ between $A'$ and a substring
$B'$ in $B$. For simplicity assume that $B'$ is a substring of
the first type. Then, the variable associated with $A'$ is a
variable of the form $v^j_i$ on a path between some two key
vertices $v_i$ and $v_{i+1}$.
For each $k \leq j$,
let $A^k_i$ denote the substring associated with  $v^k_i$. Notice that $A'$ is the concatenation of $A^{j-1}_i$ and the substring $A''$ that is derived from $v^j_i$'s right child.
The $DIST$
between $A'$ and $B'$ is thus constructed by first recursively
constructing the $DIST$ between $A^{j-1}_i$ and $B'$,
and then merging this with the $DIST$ between $A''$ and $B'$.
The latter already being available since the variables associated with $A''$ and $B'$ are of the first type.

\begin{lemma}
\label{Lemma: dist repository rational}%
Building the $DIST$ repository under a rational scoring function
can be done in $O(n^2x\lg x)$ time.
\end{lemma}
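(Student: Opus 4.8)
The plan is to bound the total number of $DIST$-table merges carried out during the construction and multiply by the $O(x \lg x)$ cost of a single succinct merge under a rational scoring function (Tiskin~\cite{Tiskin2009}). Write $n = n_A + n_B$, where $n_A$ and $n_B$ are the sizes of $\mathcal{A}$ and $\mathcal{B}$. First I would record the two structural facts proved in Section~\ref{Section: Constructing the x-partition}: every substring occurring in the partition of $A$ (resp. $B$) is associated with a distinct grammar variable of $\mathcal{A}$ (resp. $\mathcal{B}$), and each such substring has length $O(x)$. Hence $A$'s partition has $O(n_A)$ distinct substrings, $B$'s has $O(n_B)$, and the repository stores at most one table per pair of variables, i.e. $O(n_A n_B) = O(n^2)$ distinct $DIST$ tables, each of dimension $O(x) \times O(x)$.

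The core of the argument is to show that, with memoization so that no pair is recomputed, each distinct table costs only $O(1)$ merges. For a pair $(A', B')$ of first-type substrings the table is assembled from its four children $(A'_1, B'_1), (A'_1, B'_2), (A'_2, B'_1), (A'_2, B'_2)$ by three merges; since $A'$ is either a key-vertex string of length $\Theta(x)$ or a variable deriving a string of length $< x$, all four children derive strings of length $O(x)$, so every table appearing in this recursion stays $O(x) \times O(x)$. For a second-type $A'$ associated with a path variable $v^j_i$, the decomposition $A' = A^{j-1}_i \cdot A''$ lets me build $DIST$ for $(A', B')$ from the recursively-obtained table for $(A^{j-1}_i, B')$ and the already-available first-type table for $(A'', B')$ using a single merge; here $A^{j-1}_i$, $A''$, and $A'$ all have length $O(x)$, so this merge is again $O(x) \times O(x)$ (the symmetric cases where $B'$ is second-type are handled identically).

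Putting these together, memoization guarantees that each of the $O(n^2)$ distinct tables is built exactly once using $O(1)$ merges, with single-terminal variables serving as trivial merge-free base cases. The construction therefore performs $O(n^2)$ merges, each on $O(x) \times O(x)$ succinct tables and hence costing $O(x \lg x)$, for a total of $O(n^2 x \lg x)$.

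I expect the main obstacle to be the bookkeeping that keeps every intermediate table at size $O(x) \times O(x)$: a path variable $v^j_i$ may itself derive a string far longer than $x$, so correctness of the per-merge bound relies on our only ever manipulating its associated length-$O(x)$ substring $A^j_i$ via the incremental decomposition, never the full derivation. Establishing that the recursion never forces a larger merge, together with invoking the bottom-up invariant of Section~\ref{Section: Constructing the x-partition} to confirm that identical variables receive identical substrings (so that memoization is sound and the count of distinct tables is genuinely $O(n^2)$), is where the real care is needed; the final arithmetic $O(n^2)\cdot O(x \lg x)$ is then immediate.
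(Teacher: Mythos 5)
Your proposal is correct and follows essentially the same route as the paper's proof: bound the number of distinct substring pairs by $O(n^2)$ via the association of each partition substring with a unique SLP variable, charge $O(1)$ merges per pair (with memoization), and multiply by Tiskin's $O(x \lg x)$ cost per succinct merge. Your accounting is in fact slightly more careful than the paper's (which states ``exactly one merge per pair'' where the first-type recursion really uses three), and your attention to keeping all intermediate tables at size $O(x) \times O(x)$ makes explicit what the paper leaves implicit.
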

\begin{proof}
Recall that for rational scoring functions, merging two
succinct $DIST$s of size $O(x)$ each requires $O(x \lg x)$
time~\cite{Tiskin2009}. We perform exactly one merge per each
distinct pair of substrings in $A$ and $B$ induced by the
$x$-partition. Since each substring is associated with a unique
SLP variable, there can only be $O(n^2)$ distinct
substring pairs. Thus, we get that only $O(n^2)$ merges of
succinct $DIST$s need to be performed to create our
repository, hence we achieve the required bound. \qed
\end{proof}

\begin{lemma}
\label{Lemma: dist repository arbitrary}%
Building the $DIST$ repository under an arbitrary scoring
function can be done in $O(n^2x^2)$ time.
\end{lemma}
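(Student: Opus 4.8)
The plan is to follow the same template as the proof of Lemma~\ref{Lemma: dist repository rational}, substituting the merge cost appropriate for arbitrary scoring functions. There are three quantities to pin down: the cost of a single $DIST$ merge, the number of distinct substring pairs induced by the $x$-partition, and the number of merges charged to each such pair. Multiplying the first by the total number of merges will give the bound.

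First I would recall that, under an arbitrary scoring function, merging two $x \times x$ $DIST$ tables takes $O(x^2)$ time by running $x$ iterations of SMAWK, as established in Section~\ref{Section: Block Edit-Distance via SLPs}. Next I would count the distinct pairs: every substring in the $x$-partition of $A$ (respectively $B$) is associated with a unique grammar variable of $\mathcal{A}$ (respectively $\mathcal{B}$), including the path variables $v_i^j$ introduced in Section~\ref{Section: Constructing the x-partition}. Since the SLPs have total size $n$, there are $O(n)$ distinct substrings in each string, and hence at most $O(n^2)$ distinct substring pairs $(A',B')$.

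The crux of the argument, and the step I expect to be the main obstacle, is showing that only $O(1)$ merges are charged to each distinct pair, so that the total number of merges remains $O(n^2)$. For a pair $(A',B')$ of the first type, the construction merges the four child $DIST$ tables corresponding to $(A'_1,B'_1)$, $(A'_1,B'_2)$, $(A'_2,B'_1)$, $(A'_2,B'_2)$, a constant number of merges. For a pair where $A'$ is of the second type and associated with some $v_i^j$, the construction performs a single merge of the recursively built $DIST$ between $A_i^{j-1}$ and $B'$ with the already-available $DIST$ between $A''$ and $B'$. The delicate point is that the recursion for the second type unwinds along the path $v_i^1,\ldots,v_i^j$, and one must verify that this does not trigger repeated recomputation: each intermediate $DIST$ between $A_i^k$ and $B'$ corresponds to a \emph{distinct} variable pair, so charging one merge to each such pair, together with memoizing already-constructed tables (justified because identical subtrees generate identical substrings and therefore identical $DIST$ tables), keeps the per-pair merge count constant.

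Finally, combining the pieces, the repository requires $O(n^2)$ merges, each costing $O(x^2)$ time, for a total of $O(n^2 x^2)$, as claimed.
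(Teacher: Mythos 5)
Your proposal is correct and follows the same route as the paper: the paper's proof likewise multiplies the $O(x^2)$ SMAWK-based merge cost by the $O(n^2)$ bound on merges carried over from the rational case, where that bound comes from each partition substring being associated with a unique SLP variable. Your additional accounting for why each distinct variable pair is charged only $O(1)$ merges (and the role of memoization across identical subtrees) is a more explicit rendering of what the paper leaves implicit, not a different argument.
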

\begin{proof}
In the case of arbitrary scoring functions, merging two
$x\times x$ $DIST$ tables requires $O(x^2)$ time. The same
upper bound shown above of $O(n^2)$ merges may be performed in
this case as well, and therefore we achieve the required bound.
\qed
\end{proof}

\section{Putting It All Together}
\label{Section: Putting It All Together}

As the major components of our algorithms have now been
explained, we go on to summarize our main results. In
particular, we complete the proof of Theorem~\ref{Theorem:
First Result} and~\ref{Theorem: Second Result}.

\begin{lemma}
\label{Lemma: summary rational}%
The block edit distance algorithm runs in $O(n^2x \lg x + N^2
\lg x/x )$ time in case the underlying scoring function is
rational.
\end{lemma}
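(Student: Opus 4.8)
The plan is to bound the running time of the block edit-distance algorithm step by step, and then observe that the stated bound is simply the sum of the two dominant contributions. First I would recall that Step~1, constructing the $x$-partition, was shown in Section~\ref{Section: Constructing the x-partition} to take $O(N)$ time, and that Step~3 (filling in the first row and column via the base-case rules) and Step~5 (reading off the bottom-rightmost cell) are likewise trivial, costing $O(N)$ and $O(1)$ respectively.

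The two substantive contributions come from Steps~2 and~4. For Step~2, I would invoke Lemma~\ref{Lemma: dist repository rational} directly: building the $DIST$ repository under a rational scoring function costs $O(n^2 x \lg x)$. For Step~4, I would argue as follows. Each block of the $x$-partition has boundary of size $O(x)$, so computing its output values from its input values via (\ref{Equation: Block Output}) amounts to computing the column minima of a totally monotone matrix of side $O(x)$. Using SMAWK on the succinct representation of the block's $DIST$ table, together with Tiskin's divide-and-conquer refinement, this is done in $O(x \lg x)$ time per block. Since an $x$-partition induces $O(N^2/x^2)$ blocks, the total cost of all column-minima computations is $O((N^2/x^2) \cdot x \lg x) = O(N^2 \lg x / x)$. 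Propagating the outputs of one block to the inputs of the next via (\ref{Equation: Edit distance DP}) costs $O(x)$ per block, for a total of $O(N^2/x)$, which is absorbed by the previous term.

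Summing the five steps gives $O(N) + O(n^2 x \lg x) + O(N^2 \lg x/x) + O(1)$. Since $x \le N$, we have $N^2 \lg x / x \ge N \lg x \ge N$, so the $O(N)$ terms from Steps~1 and~3 are dominated, and the remaining two terms yield the claimed $O(n^2 x \lg x + N^2 \lg x / x)$ bound.

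There is no deep obstacle here beyond careful bookkeeping, since both dominant terms are supplied by results established earlier. The only points requiring attention are that each per-block output computation genuinely runs in $O(x \lg x)$ on the succinct $DIST$ representation—which rests on the total monotonicity of $I + DIST$ and Tiskin's speedup—and that all lower-order overheads, namely the partition construction and the $O(x)$-per-block input propagation, are indeed dominated by the two main terms.
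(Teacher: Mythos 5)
Your proposal is correct and follows essentially the same route as the paper: Steps 1, 3, and 5 are dismissed as lower-order, Step 2 is bounded by Lemma~\ref{Lemma: dist repository rational}, and Step 4 is bounded by the $O(x\lg x)$ per-block SMAWK/Tiskin computation over $O(N^2/x^2)$ blocks, exactly as in Sections~\ref{Section: Constructing the DIST Repository} and~\ref{Section: Block Edit-Distance via SLPs}. Your treatment is in fact slightly more explicit than the paper's, which simply cites the $O(N^2\lg x/x)$ figure from the earlier discussion rather than re-deriving it and re-verifying that the input-propagation cost is absorbed.
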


\begin{proof}
The time required by the block edit-distance algorithm is
dominated by computing the repository of $DIST$ tables in step
2, and the cost of computing the output values of each block in
step 4. Steps 1,3 and 5 all take linear time or less. As we
have shown in Section~\ref{Section: Constructing the DIST
Repository}, in case the underlying scoring function is
rational, the $DIST$ table repository can be built in
$O(n^2x\lg x)$ time, accounting for the first component of our
bound. As explained in Section~\ref{Section: Block
Edit-Distance via SLPs}, step 4 of the algorithm requires
$O(N^2 \lg x/x )$ time, accounting for the second component in
the above bound. \qed
\end{proof}

To conclude, constructing an $x$-partition with $x = N/n$, we
get a time complexity of $O(nN \lg (N/n))$, proving
Theorem~\ref{Theorem: First Result}.

\begin{lemma}
\label{Lemma: summary arbitrary}%
The block edit distance algorithm runs in $O(n^2x^2 + N^2/x)$
time in case the underlying scoring function is arbitrary.
\end{lemma}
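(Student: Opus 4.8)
The plan is to mirror the proof of Lemma~\ref{Lemma: summary rational} almost verbatim, substituting the per-component costs appropriate for an arbitrary scoring function. First I would observe that, exactly as in the rational case, the total running time of the block edit-distance algorithm decomposes into the cost of building the $DIST$ repository in step 2 and the cost of computing the output values of every block in step 4, since steps 1, 3 and 5 each run in linear time or less.

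For the first component, I would simply invoke Lemma~\ref{Lemma: dist repository arbitrary}, which already establishes that the repository can be built in $O(n^2x^2)$ time under an arbitrary scoring function; this accounts for the $n^2x^2$ term directly.

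For the second component, I would recall from Section~\ref{Section: Block Edit-Distance via SLPs} that an $x$-partition induces $O(N^2/x^2)$ blocks, and that for an arbitrary scoring function the output values of a single block are obtained by one application of the SMAWK algorithm to the totally monotone matrix $I + DIST$, at a cost of $O(x)$ time per block. Multiplying the number of blocks by the per-block cost gives $O(N^2/x^2)\cdot O(x) = O(N^2/x)$, which is the second term. Summing the two components yields the claimed bound $O(n^2x^2 + N^2/x)$.

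I do not anticipate any genuine obstacle here, as the argument is structurally identical to the rational case. The one point meriting a moment's care is confirming that step 4 incurs the plain SMAWK cost of $O(x)$ per block, rather than the $O(x\lg x)$ succinct-representation cost used for rational scoring, so that no spurious logarithmic factor enters the $N^2/x$ term; this is precisely what the discussion in Section~\ref{Section: Block Edit-Distance via SLPs} licenses, since for arbitrary scoring functions we operate on the full (non-succinct) $DIST$ tables.
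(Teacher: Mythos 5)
Your proposal is correct and follows essentially the same route as the paper's own proof: decompose the cost into the repository construction of step 2 (bounded by Lemma~\ref{Lemma: dist repository arbitrary} as $O(n^2x^2)$) and the block-output computation of step 4 (bounded as $O(N^2/x)$ via $O(N^2/x^2)$ blocks at $O(x)$ SMAWK time each, as stated in Section~\ref{Section: Block Edit-Distance via SLPs}). Your version is in fact slightly more explicit than the paper's, which merely cites the earlier sections for both terms (and contains a small typo, referring to ``step 3'' where step 4 is meant).
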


\begin{proof}
As described above, the time required by our algorithm is
dominated by computing the repository of $DIST$ tables in step
2, and the cost of propagating the dynamic programming values
through block boundaries in step 4. As we have shown in
Section~\ref{Section: Constructing the DIST Repository}, in
case the underlying scoring function is arbitrary, the
repository of $DIST$ tables can be built in $O(n^2x^2)$ time,
accounting for the first component of our bound. As explained
in Section~\ref{Section: Block Edit-Distance via SLPs}, step 3
requires $O(N^2/x)$ time, therefore completing the proof of the
above bound. \qed
\end{proof}

To conclude, constructing an $x$-partition with $x=(N/n)^{2/3}$,
we get a time complexity of $O(n^{2/3}N^{4/3})$, proving
Theorem~\ref{Theorem: Second Result}.

\bibliographystyle{plain}
\bibliography{biblio}

\end{document}